\renewcommand{\Pr}{\ensuremath{\operatorname{Pr}}}
\newtheorem{theorem}{\bf Theorem}
\newcounter{step}
\newlength{\totlinewidth}
  {\end{list}%
  \rule{\linewidth}{1pt}}
\newcounter{substep}
\newlength{\aligntop}
\newlength{\alignbot}
\begin{document}
\clearpage
\title{\Huge Echo State Transfer Learning for Data Correlation Aware Resource Allocation in Wireless Virtual Reality \vspace*{-0.5em}}
\author{{Mingzhe Chen\IEEEauthorrefmark{1}}, Walid Saad\IEEEauthorrefmark{2}, Changchuan Yin\IEEEauthorrefmark{1}, and M\'erouane Debbah\IEEEauthorrefmark{3}\vspace*{0em}\\ 
\authorblockA{\small \IEEEauthorrefmark{1}Beijing Key Laboratory of Network System Architecture and Convergence,\\ Beijing University of Posts and Telecommunications, Beijing, China 100876, Emails: \protect\url{chenmingzhe@bupt.edu.cn}, \protect\url{ccyin@ieee.org.} \\
\IEEEauthorrefmark{2}Wireless@VT, Bradley Department of Electrical and Computer Engineering, Virginia Tech, Blacksburg, VA, USA, Email: \protect\url{walids@vt.edu.}\\
\IEEEauthorrefmark{3}\small Mathematical and Algorithmic Sciences Lab, Huawei France R \& D, Paris, France, \\Email: merouane.debbah@huawei.com.\\
}\vspace*{-3.5em}
\thanks{This work was supported in part by the National Natural Science Foundation of China under Grants 61671086 and 6162910, by the U.S. National Science Foundation under Grants IIS-1633363, CNS-1460316, and CNS-1617896 and by the ERC Starting 
Grant 305123 MORE (Advanced Mathematical Tools for Complex Network Engineering).}
 }

\maketitle

\vspace{0cm}
\begin{abstract}
In this paper, the problem of data correlation-aware resource management is studied for a network of wireless virtual reality (VR) users communicating over cloud-based small cell networks (SCNs). In the studied model, small base stations (SBSs) with limited computational resources act as VR control centers that collect the tracking information from VR users over the cellular uplink and send them to the VR users over the downlink. In such a setting, VR users may send or request correlated or similar data (panoramic images and tracking data). 
This potential spatial data correlation can be factored into the resource allocation problem to reduce the traffic load in both uplink and downlink. This VR resource allocation problem is formulated as a noncooperative game that allows jointly optimizing the computational and spectrum resources, while being cognizant of the data correlation. To solve this game, a transfer learning algorithm based on the machine learning framework of echo state networks (ESNs) is proposed. Unlike conventional reinforcement learning algorithms that must be executed each time the environment changes, the proposed algorithm can intelligently transfer information on the learned utility, across time, to rapidly adapt to environmental dynamics due to factors such as changes in the users' content or data correlation. Simulation results show that the proposed algorithm  achieves up to 16.7\% and 18.2\% gains in terms of delay compared to the Q-learning with data correlation and Q-learning without data correlation. The results also show that the proposed algorithm has a faster convergence time than Q-learning and can guarantee low delays.\end{abstract}
%


\section{Introduction}
 Virtual reality (VR) can enable users to virtually hike the Grand Canyon or engage in a secret adventure as a video game hero without leaving their room. 
However, due to the wired connections of conventional VR devices, the users can only take a restricted set of actions which, in turn, limits the VR application space.
To enable immersive VR applications, VR systems can be operated using wireless networking technologies \cite{bacstuug2016towards}. However, operating VR devices over wireless small cell networks (SCNs) faces many challenges \cite{bacstuug2016towards} that include effective image compression, tracking, and low-latency computation and communication.

The existing literature has studied a number of problems related to wireless VR such as in \cite{bacstuug2016towards,ahn2017delay,singh2017high,VROWNchen}. The authors in \cite{bacstuug2016towards} exposed the future challenges of VR systems over a wireless network. However, this work is a qualitative survey that does not provide any rigorous wireless VR model. In \cite{ahn2017delay}, a channel access scheme for wireless multi-user VR system is proposed. The authors in \cite{singh2017high} proposed an alternate current magnetic field-based tracking system to track the position and orientation of a VR user's head. However, the recent works in \cite{ahn2017delay} and \cite{singh2017high} do not develop a VR model that can capture all factors of VR QoS and they only analyze a single VR metric such as delay or tracking accuracy. 
 In \cite{VROWNchen}, we proposed a wireless VR model that captures the tracking accuracy, processing delay, and transmission delay and proposed a machine learning based algorithm to solve the resource allocation problem. However, the work in \cite{VROWNchen} focuses only on spectrum allocation that ignores the correlation between the data of the VR users. Indeed, the VR data (tracking data or VR image data) pertaining to different users can be potentially correlated. For example, when the VR users are watching a football game with different perspectives, the cloud will need to only transmit one ${360^ \circ }$ image to the SBS, then the SBS can rotate the image and transmit it to different users. In this case, the use of data correlation to reduce the traffic load in the transmission of tracking information and VR images can improve the delay.

The main contribution of this paper is a novel framework for enabling VR applications over wireless cellular networks. To the best of our knowledge, \emph{this is the first work that jointly considers data correlation, spectrum resource allocation, and computational resource allocation for VR over cellular networks.} In this regard, our key contributions include:

\begin{itemize}
\item We propose a novel VR model to jointly capture the downlink and uplink transmission delay, backhaul transmission delay, and computational time thus effectively quantifying the VR delay for all users in a wireless VR network.
\item For the considered wireless VR applications, we analyze the allocation of resource blocks \emph{jointly} over, the uplink and downlink, along with the allocation of computational resources in the uplink. We formulate the problem as a noncooperative game in which the players are the small base stations (SBSs). Each player seeks to find an optimal resource allocation scheme to optimize a utility function that captures the VR delay. 
\item To find a Nash equilibrium of this game, we propose a transfer learning algorithm based on echo state networks (ESNs) \cite{chen2017machine}. The proposed algorithm can intelligently transfer information on the learned utility across time, and, hence, allow adaptation to environmental dynamics due to factors such as changes in the users' data correlation.
\item Simulation results show that the proposed algorithm can, respectively, yield 16.7\% and 18.2\% gains in terms of delay compared to Q-learning with data correlation and Q-learning without data correlation.   
\end{itemize}
 

\section{System Model and Problem Formulation}\label{se:system}
%

Consider the downlink and uplink transmissions of an SCN servicing a set $\mathcal{U}$ of $U$ wireless VR users and a set $\mathcal{B}$ of $B$ SBSs. Here, the downlink is used to transmit the VR images displayed on each user's VR device while the uplink is used to transmit the tracking information that is used to determine each VR user's location and orientation. The SBSs are connected to a cloud via \emph{capacity-constrained backhaul} links and the SBSs serve their users using the cellular band. $V_F$ represents the maximum backhaul transmission rate for all users. Here, we focus on entertainment VR applications such as watching immersive videos and playing immersive games.

In our model, the SBSs adopt an orthogonal frequency division multiple access (OFDMA) technique and transmit over a set of $\mathcal{V}$ of $V$ uplink resource blocks and a set of $\mathcal{S}$ of $S$ downlink resource blocks. The coverage of each SBS is a circular area with radius $r$ and each SBS only allocates resource blocks to the users located in its coverage range. We also assume that the resource blocks of each SBS will all be allocated to the associated users. 

 
   
\subsection{Data Correlation Model}
\subsubsection{Downlink Data Correlation Model}          
In VR wireless networks, multiple VR users may play the same immersive game with different locations and orientations. In this case, the cloud can exploit the data correlation between the users that are playing the same immersive game to reduce the traffic load of backhaul links. For example, when the users are watching the same immersive sports game, the cloud can extract the difference between the VR images of these users and will need to only transmit to an SBS the data that is unique to each user. However, when the VR users are playing different immersive games, the data correlation between the users is low and, hence, the cloud needs to transmit entire VR images to the associated VR users. In order to define the data correlation of VR images, we first assume that the number of pixels that user $i$ needs to construct the VR images is $N_{i}$ and the number of different pixels between any pair of users $i$ and $k$ is $N_{ik}$. Here, $N_{ik}$ is calculated by the cloud using image processing methods such as motion search \cite{yang2002computation}. Then, the data correlation between user $i$ and user $k$ can be defined as follows:
\begin{equation}\label{eq:datacu}
\setlength{\abovedisplayskip}{3 pt}
\setlength{\belowdisplayskip}{3 pt}
{\phi _{ik}} = \frac{{{N_{ik}}}}{{{N_i+N_k}}},
\end{equation}
where $N_k$ is the number of pixels that user $k$ needs to construct the VR images during a period. Indeed, (\ref{eq:datacu}) captures the difference between the images of users $i$ and $j$. From (\ref{eq:datacu}), we can see that, when user $i$ and user $k$ are associated with the same SBS, the cloud needs to only transmit $N_i+N_j-\left(N_i+N_j\right){\phi _{ij}}$ pixels to that SBS.
\subsubsection{Uplink Data Correlation Model}   
 In the uplink, the users must transmit the tracking information to the SBSs. The tracking information is collected by the sensors placed at a VR user's headset or near the VR user. It has been shown that, for commonly used data-gathering applications, the data source can be modeled as a Gaussian field \cite{cressie2015statistics}. The uplink data is collected by the sensors and, hence, the uplink data can be assumed to follow the Gaussian distribution. We can assume that the tracking data, $X_i$, collected by each VR user $i$ is a Gaussian random variable with mean $\mu_i$ and variance $\sigma_i^2$. In wireless VR, observations from proximal VR devices are often correlated due to the dense deployment density. Hence, we consider the power exponential model \cite{vuran2004spatio} to capture the spatial correlation of VR tracking data. Here, the covariance $\sigma_{ij}$ between user $i$ and user $j$ separated by distance $d_{ij}$ is: 
 \begin{equation}
 \setlength{\abovedisplayskip}{4 pt}
\setlength{\belowdisplayskip}{4 pt}
 {\sigma _{ij}} = {\mathop{\rm cov}} \left( {{X_i},{X_j}} \right) = {\sigma _i}{\sigma _j}{e^{ - {{d_{ij}^\alpha } \mathord{\left/
 {\vphantom {{d_{ij}^\alpha } \kappa }} \right.
 \kern-\nulldelimiterspace} \kappa }}},
\end{equation}    
 where $\alpha$ and $\kappa$ capture the significance of distance variation on data correlation. 
 
 \subsection{Delay Model}
In our model, the VR images are transmitted from the cloud to the SBSs then to the users. The tracking information is transmitted from the users to the SBSs and processed at each corresponding SBS. In this case, the backhaul links are only used for VR image transmission and the transmission rate of each VR image from the cloud to the SBS can be given as ${V_{Fi}} = \frac{{{V_{F}}}}{U}$ \cite{chen2016caching}. Here, we assume that the backhaul transmission rate of each user is equal and we do not consider the optimization of the backhaul transmission. In a VR model, we need to capture the VR transmission requirements such as high data rate, low delay, and accurate tracking and, hence, we consider the transmission delay as the main VR QoS metric of interest. The downlink rate of user $i$ associated with SBS $j$ is:
\begin{equation}\label{eq:cd}
\setlength{\abovedisplayskip}{4 pt}
\setlength{\belowdisplayskip}{4 pt}
{c_{ij}\left(\boldsymbol{s}_{ij}\right)} = \sum\limits_{k = 1}^{{S}} {s_{ij,k}B{{\log }_2}\left( {1 + \gamma _{ij,k}} \right)},
\end{equation}
where $\boldsymbol{s}_{ij} = \left[ {s_{ij,1}, \ldots ,s_{ij,{S}}} \right]$ is the vector of resource blocks that SBS $j$ allocates to user $i$ with $s_{ij,k} \in \left\{ {1,0} \right\}$. Here, $s_{ij,k}=1$ indicates that resource block $k$ is allocated to user $i$. ${\gamma _{ij,k}=\frac{{{P_B}{h_{ij}^k}}}{{{N_0 ^2} + \sum\limits_{l \in \mathcal{R}^k,l \ne j} {{P_B}{h_{il}^k}} }}}$ is the signal-to-interference-plus-noise ratio (SINR) between user $i$ and SBS $j$ over resource block $k$. $\mathcal{R}^k$ represents the set of the SBSs that use downlink resource block $k$, $B$ is the bandwidth of each subcarrier, $P_B$ is the transmit power of SBS $j$ which is assumed to be equal for all SBSs, $N_0^2$ is the variance of the Gaussian noise and $h_{ij}^k=g_{ij}^kp_{ij}^{-\beta}$ is the path loss between user $i$ and SBS $j$ over resource block with $g_{ij}^k$ is the Rayleigh fading parameter, $d_{ij}$ is the distance between user $i$ and SBS $j$, and $\beta$ is the path loss exponent. Based on (\ref{eq:datacu}) and (\ref{eq:cd}), the downlink transmission delay at time slot $t$ is:
\begin{equation}
\setlength{\abovedisplayskip}{4 pt}
\setlength{\belowdisplayskip}{4 pt}
 D_{ij}\left(L_{i}\left(\phi _{i}^{\max} \right),\boldsymbol{s}_{ij}\right) = \frac{{{L_{i}\left(\phi _{i}^{\max} \right)}}}{{{c_{ij}\left(\boldsymbol{s}_{ij}\right)}}}+\frac{{{L_{i}\left(\phi _{i}^{\max} \right)}}}{V_{Fi}},
 \end{equation}
  where $L_{i}\left(\phi _{i}^{\max} \right)$ is the data that user $i$ needs to construct a VR image during a period and $\phi _{i}^{\max}\!\!=\!\!\!\mathop {\max }\limits_{k\in \mathcal{U}_j, k \ne i}\! \left( {{\phi _{ik}}} \right)$ is the maximum downlink data correlation between user $i$ and other users associated with SBS $j$. Finding the maximum data correlation allows minimizing the downlink transmission data transmitted in the downlink and that will be used construct a VR image. Here, the first term is the transmission time from SBS $j$ to user $i$ and the second term is the transmission time from the cloud to SBS $j$. We assume that $P_U$ is the transmit power of each user which is assumed to be equal for all users. The bandwidth of each uplink resource block is also $B$. In this case, the uplink rate of each user $i$ associated with SBS $j$ is:
\begin{equation}
\setlength{\abovedisplayskip}{4 pt}
\setlength{\belowdisplayskip}{4 pt}
{c_{ij}\left(\boldsymbol{v}_{ij}\right)} = \sum\limits_{k = 1}^{{V}} {v_{ij,k}B{{\log }_2}\left( {1 + \gamma _{ij,k}^\textrm{u}} \right)},
\end{equation}
where $\boldsymbol{v}_{ij} = \left[ {v_{ij,1}, \ldots, v_{ij,{V}}} \right]$ is the vector of resource blocks that SBS $j$ allocates to user $i$ with $v_{ij,k} \in \left\{ {1,0} \right\}$. ${\gamma _{ij,k}^\textrm{u}=\frac{{{P_U}{h_{ij}^k}}}{{{\sigma ^2} + \sum\limits_{l \in \mathcal{U}^k,l \ne j} {{P_U}{h_{il}^k}} }}}$ is the SINR between user $i$ and SBS $j$ over resource block $k$ with $\mathcal{U}^k$ represents the set of users that use uplink resource blocks $k$. In this case, the uplink transmission delay can be given by $\frac{{{K_{i}\left(\sigma_{i}^{\max}\right)}}}{{{c_{ij}\left(\boldsymbol{v}_{ij}\right)}}}$
where $K_{i}$ is the data {that needs to be transmitted} and $\sigma _{i}^{\max}=\mathop {\max }\limits_{k\in \mathcal{U}_j, k \ne i} \left( {{\sigma _{ik}}} \right)$ is the maximum uplink data correlation between user $i$ and other SBS $j$'s associated users. Similarly, finding the maximum data correlation allows minimizing the uplink transmission data that SBS $j$ uses to determine user $i$'s location and orientation. 

In the uplink, the tracking information can be directly processed by the SBSs that have limited computational power. The computational resource of each SBS, $c$,  represents its ability to compute the tracking data. Each SBS $j$ will allocate the total computational power to the associated users and, hence, $m_{ij}$ is used to represent the computational power that SBS $j$ allocates to user $i$ with $\sum\nolimits_{i \in {\mathcal{U}_j}} {{m_{ij}}}  = m$. $\mathcal{U}_j$ represents the set of the users associated with SBS $j$. The computation time of SBS $j$ that processes the tracking data collected by user $i$ is $\frac{K_{i}\left(\sigma_{i}^{\max}\right)}{m_{ij}}$ and the total uplink delay can be given by:
\begin{equation}
\setlength{\abovedisplayskip}{3 pt}
\setlength{\belowdisplayskip}{3 pt}
D_{ij}^\textrm{u}\left(K_{i}\!\left(\sigma_{i}^{\max}\right)\!,\boldsymbol{v}_{ij},m_{ij}\right) = \frac{{{K_{i}\!\left(\sigma_{i}^{\max}\right)}}}{{{c_{ij}\left(\boldsymbol{v}_{ij}\right)}}}+\frac{K_{i}\!\left(\sigma_{i}^{\max}\right)}{m_{ij}},
\end{equation}
where the first term is the transmission time from user $i$ to SBS $j$ and the second term is the computation time for user $i$' data. The computation time depends on the computational resources that SBS $j$ allocates to each user that will affect the uplink delay.      

 \subsection{Utility Function Model}
 In order to jointly consider the transmission delay in both uplink and downlink, we introduce a method based on the framework of multi-attribute utility theory \cite{abbas2010constructing} to construct an appropriate utility function to capture transmission delay in both uplink and downlink. We first introduce the utility functions of transmission delay in uplink and downlink, separately. Then, we formulate the utility function based on \cite{abbas2010constructing}.
 
 The utility function of downlink transmission delay is constructed based on the normalization of downlink transmission delay, which can be given by:
 \begin{equation}\label{eq:delayd}
 \setlength{\abovedisplayskip}{4 pt}
\setlength{\belowdisplayskip}{4 pt}
 \begin{split}
  &\bar D_{ij}\left(L_{i}\left(\phi _{i}^{\max} \right),\boldsymbol{s}_{ij}\right)=\\&\!\!\!\left\{ {\begin{array}{*{20}{c}}
{\!\!\frac{{D_{ij,\max}-{D_{ij}}\left( {{L_{i}}\left( {\phi _{i}^{\max }} \right),{\boldsymbol{s}_{ij}}} \right)}}{{D_{ij,\max}- {\gamma _D}}},{D_{ij}}\left( {{L_{i}}\left( {\phi _{i}^{\max }} \right)\!,{\boldsymbol{s}_{ij}}} \right) \ge {\gamma _D},}\\
{\;\;\;\;\;\;\;\;\;\;\;\;\;\;\;\;1,\;\;\;\;\;\;\;\;\;\;\;\;\;\;\;\;{D_{ij}}\left( {{L_{i}}\left( {\phi _{i}^{\max }} \right)\!,{\boldsymbol{s}_{ij}}} \right) < {\gamma _D},}
\end{array}} \right.
 \end{split}
 \end{equation}
 where $\gamma_{D}$ is the maximal tolerable delay for each VR user (maximum supported by the VR system being used) and $D_{ij,\max}=\mathop {\max }\limits_{{\boldsymbol{s}_{ij}}} \left( {{D_{ij}}\left( {{L_{i}}\left( 0 \right),{\boldsymbol{s}_{ij}}} \right)} \right)$ is the maximum transmission delay. From (\ref{eq:delayd}), we can see that, when the downlink transmission delay is smaller than $\gamma_D$, the utility value will remain at 1. This is due to the fact when the delay meets the system requirement, the network will encourage the SBSs to reallocate the resource blocks to other users. 
 The utility function for the uplink transmission is:
  \begin{equation}\small\label{eq:delayu}
  \setlength{\abovedisplayskip}{4 pt}
\setlength{\belowdisplayskip}{4 pt}
 \begin{split}
  &\bar D_{ij}^\textrm{u}\left(K_{i}\left(\sigma _{i}^{\max} \right),\boldsymbol{v}_{ij},m_{ij}\right)=\\&\!\!\!\!\left\{ {\begin{array}{*{20}{c}}
{\!\!\!\!\!\frac{{D_{ij,\max}^\textrm{u}\!-{D_{ij}^\textrm{u}}\left( {{K_{i}}\left( {\sigma _{i}^{\max }} \right),{\boldsymbol{v}_{ij}},m_{ij}} \right)}}{{D_{ij,\max}^\textrm{u}- {\gamma _D}}},{D_{ij}^\textrm{u}}\!\left( {{K_{i}}\!\left( {\sigma _{i}^{\max }} \right)\!,{\boldsymbol{v}_{ij}},m_{ij}} \right)\! \ge \!{\gamma _D^\textrm{u}},}\\
{\;\;\;\;\;\;\;\;\;\;\;\;\;\;\;\;\;\;\;\;1,\;\;\;\;\;\;\;\;\;\;\;\;\;\;\;\;\;\;\;\;\;\!{D_{ij}^\textrm{u}}\!\left( {{K_{i}}\!\left( {\sigma _{i}^{\max }} \right)\!,{\boldsymbol{v}_{ij}},m_{ij}} \right) \!<\! {\gamma _D^\textrm{u}},}
\end{array}} \right.
 \end{split}
 \end{equation}
 where $\gamma_{D}^\textrm{u}$ is the maximal tolerable delay for the VR tracking information transmission and $D_{ij,\max}^\textrm{u}=\mathop {\max }\limits_{{\boldsymbol{v}_{ij},m_{ij}}} \left( {{D_{ij}^\textrm{u}}\left( {{K_{i}}\left( 0 \right),{\boldsymbol{v}_{ij}},m_{ij}} \right)} \right)$ is the maximal uplink delay. Based on (\ref{eq:delayd}) and (\ref{eq:delayu}), the total utility function that captures both downlink and uplink delay for user $i$ associated with SBS $j$ is:
 \begin{equation}
 \setlength{\abovedisplayskip}{4 pt}
\setlength{\belowdisplayskip}{4 pt}
 \begin{split}
&\!\!\!\!\!\!\!\!\!\!{U_{ij}}\left( {{\boldsymbol{s}_{ij}},{\boldsymbol{v}_{ij}},{{m}_{ij}}} \right) =\\& {\bar D_{ij}}\left( {{L_{i}}\left( {\phi _{i}^{\max }} \right),{\boldsymbol{s}_{ij}}} \right)\bar D_{ij}^\textrm{u}\left( {{K_{i}}\left( {\sigma _{i}^{\max }} \right),{\boldsymbol{v}_{ij}},{m_{ij}}} \right).
 \end{split}
 \end{equation}
 Here, ${L_{i}}\left( {\phi _{i}^{\max }} \right)$ and ${K_{i}}\left( {\sigma _{i}^{\max }} \right)$ are determined by the user association scheme. To capture the gain that stems from the allocation of the resource blocks and the computational capabilities, we state the following result:
 
 \begin{theorem}\label{th:1}
\emph{The utility gain of user $i$'s delay due to an increase in the amount of allocated resource blocks and computational resources is:}

\emph{\romannumeral1) The gain that stems from an increase in the allocated uplink resource blocks, $\Delta U_{ij}$, is given by:
\begin{equation}\small
\Delta U_{ij}=\left\{ {\begin{array}{*{20}{c}}
\;{{f_{\bar D_{ij}^\textrm{u}}}\left( {\frac{1}{{{c_{ij}}\left( {{\boldsymbol{v}_{ij}}} \right)}}} \right),\;\;\;\;\;{c_{ij}}\left( {\Delta {\boldsymbol{v}_{ij}}} \right) \gg {c_{ij}}\left( {{\boldsymbol{v}_{ij}}} \right),}\\
\;{{f_{\bar D_{ij}^\textrm{u}}}\left( {\frac{{{c_{ij}}\left( {\Delta {\boldsymbol{v}_{ij}}} \right)}}{{{c_{ij}}{{\left( {{\boldsymbol{v}_{ij}}} \right)}^2}}}} \right),\;\;\;{c_{ij}}\left( {\Delta {\boldsymbol{v}_{ij}}} \right) \ll {c_{ij}}\left( {{\boldsymbol{v}_{ij}}} \right),}\\
{{f_{\bar D_{ij}^\textrm{u}}}\left( {\frac{{{c_{ij}}\left( {\Delta {\boldsymbol{v}_{ij}}} \right)}}{{{c_{ij}}{{\left( {{\boldsymbol{v}_{ij}}} \right)}^2} + {c_{ij}}\left( {{\boldsymbol{v}_{ij}}} \right){c_{ij}}\left( {\Delta {\boldsymbol{v}_{ij}}} \right)}}} \right),\;\;\;\textrm{else},}
\end{array}} \right.
\end{equation}
where ${{f_{\bar D_{ij}^\textrm{u}}}\left( x \right)}={\bar D_{ij}}\left( {{L_{i}}\left( {\phi _{i}^{\max }} \right),{\boldsymbol{s}_{ij}}} \right)\left(\frac{{{K_{i}\left(\sigma_{i}^{\max}\right)}x }}{{D_{ij,\max}^\textrm{u}- {\gamma _D^\textrm{u}}}}\right)$.}
 
\emph{\romannumeral2) The gain that stems from the increase in the number of downlink resource blocks allocated to user $i$, $\Delta U_{ij}$, is:
\begin{equation}
\Delta {U_{ij}} ={\left\{ {\begin{array}{*{20}{c}}
{{f_{{{\bar D}_{ij}}}}\left( {\frac{1}{{{c_{ij}}\left( {{\boldsymbol{s}_{ij}}} \right)}}} \right),\;\;{c_{ij}}\left( {\Delta {\boldsymbol{s}_{ij}}} \right) \gg {c_{ij}}\left( {{\boldsymbol{s}_{ij}}} \right),}\\
{{f_{{{\bar D}_{ij}}}}\left( {\frac{{{c_{ij}}\left( {\Delta {\boldsymbol{s}_{ij}}} \right)}}{{{c_{ij}}{{\left( {{\boldsymbol{s}_{ij}}} \right)}^2}}}} \right),{c_{ij}}\left( {\Delta {\boldsymbol{s}_{ij}}} \right) \ll {c_{ij}}\left( {{\boldsymbol{s}_{ij}}} \right),}\\
{{f_{{{\bar D}_{ij}}}}\left( {\frac{{{c_{ij}}\left( {\Delta {\boldsymbol{s}_{ij}}} \right)}}{{{c_{ij}}{{\left( {{\boldsymbol{s}_{ij}}} \right)}^2} + {c_{ij}}\left( {{\boldsymbol{s}_{ij}}} \right){c_{ij}}\left( {\Delta {\boldsymbol{s}_{ij}}} \right)}}} \right),\;\;\;\;\textrm{else},}
\end{array}} \right.}
\end{equation}
where ${{f_{\bar D_{ij}}}\!\!\left( x \right)}\!=\!{\bar D_{ij}^\textrm{u}}\!\left( {{K_{i}}\left( {\sigma _{i}^{\max }} \right)\!,{\boldsymbol{v}_{ij}},m_{ij}} \right)\!\left(\!\!\frac{{{L_{i}\left(\phi_{i}^{\max}\right)}x}}{{D_{ij,\max}- {\gamma _D}}}\!\right)$.}

\emph{\romannumeral3) The gain that stems from the increase in the amount of computational resources, $\Delta m$, allocated to user $i$, $\Delta U_{ij}$, is:
\begin{equation}\small
\Delta {U_{ij}} ={\bar D_{ij}}\!\left( {{L_{i}}\left( {\phi _{i}^{\max }} \right),{\boldsymbol{s}_{ij}}} \right)\!\!\left(\!\frac{{K_{i}\!\left(\sigma_{i}^{\max}\right)\!\Delta m}}{{\left(\!D_{ij,\max}^\textrm{u}\!\!- \!{\gamma _D^\textrm{u}}\right)\!\!\left({m_{ij}}\!\!\left( {{m_{ij}} \!\!+ \!\!\Delta m} \right)\right)}}\!\right)\!.
\end{equation}}
\end{theorem}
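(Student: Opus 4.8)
The plan is to exploit the product structure of the utility together with the additivity of the rate in the resource-block indicators; each of the three gains then reduces to a single reciprocal-difference computation. First I would note that $U_{ij}=\bar D_{ij}\,\bar D_{ij}^\textrm{u}$, that $\bar D_{ij}$ depends only on the downlink blocks $\boldsymbol{s}_{ij}$, and that $\bar D_{ij}^\textrm{u}$ depends only on $(\boldsymbol{v}_{ij},m_{ij})$. Hence, when extra uplink blocks $\Delta\boldsymbol{v}_{ij}$ are added, $\bar D_{ij}$ is held fixed and $\Delta U_{ij}=\bar D_{ij}\,\Delta\bar D_{ij}^\textrm{u}$; likewise in the downlink and computational cases only the corresponding factor varies. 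This decoupling is what lets the three gains be treated separately.

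For part (i), the key observation is that $c_{ij}(\boldsymbol{v}_{ij})$ in \refeq{eq:delayu} is a sum over the allocated blocks, so, holding the per-subcarrier SINRs fixed (the interference on each subcarrier is unchanged by assigning additional blocks to user $i$), the rate is linear in the indicators and $c_{ij}(\boldsymbol{v}_{ij}+\Delta\boldsymbol{v}_{ij})=c_{ij}(\boldsymbol{v}_{ij})+c_{ij}(\Delta\boldsymbol{v}_{ij})$. I would then write the change in uplink delay as $K_{i}(\sigma_i^{\max})\bigl(1/(c_{ij}(\boldsymbol{v}_{ij})+c_{ij}(\Delta\boldsymbol{v}_{ij}))-1/c_{ij}(\boldsymbol{v}_{ij})\bigr)$ and combine it into the single fraction $-K_{i}(\sigma_i^{\max})\,c_{ij}(\Delta\boldsymbol{v}_{ij})/\bigl(c_{ij}(\boldsymbol{v}_{ij})^2+c_{ij}(\boldsymbol{v}_{ij})c_{ij}(\Delta\boldsymbol{v}_{ij})\bigr)$. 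Dividing by the normalizing constant $D_{ij,\max}^\textrm{u}-\gamma_D^\textrm{u}$ from the above-threshold branch of \refeq{eq:delayu} and multiplying by $\bar D_{ij}$ identifies $\Delta U_{ij}$ with ${f_{\bar D_{ij}^\textrm{u}}}$ evaluated at that fraction, which is exactly the ``else'' line. The other two lines then follow by asymptotics: if $c_{ij}(\Delta\boldsymbol{v}_{ij})\gg c_{ij}(\boldsymbol{v}_{ij})$ the cross term dominates the denominator and the argument collapses to $1/c_{ij}(\boldsymbol{v}_{ij})$, whereas if $c_{ij}(\Delta\boldsymbol{v}_{ij})\ll c_{ij}(\boldsymbol{v}_{ij})$ the square term dominates and it collapses to $c_{ij}(\Delta\boldsymbol{v}_{ij})/c_{ij}(\boldsymbol{v}_{ij})^2$.

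Part (ii) is the same computation with uplink and downlink interchanged: only $\bar D_{ij}$ now varies, the backhaul term $L_{i}(\phi_i^{\max})/V_{Fi}$ in the downlink delay is independent of $\boldsymbol{s}_{ij}$ and cancels in the difference, and the identical reciprocal-difference plus normalization by $D_{ij,\max}-\gamma_D$ in \refeq{eq:delayd} yields ${f_{\bar D_{ij}}}$ at the analogous fraction, with the two asymptotic cases arising identically. For part (iii) the computational resource enters $D_{ij}^\textrm{u}$ only through the term $K_{i}(\sigma_i^{\max})/m_{ij}$, so I would form $K_{i}(\sigma_i^{\max})\bigl(1/(m_{ij}+\Delta m)-1/m_{ij}\bigr)=-K_{i}(\sigma_i^{\max})\Delta m/\bigl(m_{ij}(m_{ij}+\Delta m)\bigr)$, normalize by $D_{ij,\max}^\textrm{u}-\gamma_D^\textrm{u}$, and multiply by $\bar D_{ij}$ to recover the stated closed form; no asymptotic splitting is needed here because the expression is reported exactly.

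I do not expect a deep obstacle; the effort is bookkeeping, and the two points requiring genuine care are the following. First, one must justify the rate additivity, i.e.\ argue that adding blocks to user $i$ leaves the per-subcarrier SINRs fixed so that $c_{ij}$ is linear in the block indicators; without this the difference of reciprocals would not close into the stated fractions. Second, the entire derivation is valid only in the delay-limited (above-threshold) branch of the piecewise utilities \refeq{eq:delayd}--\refeq{eq:delayu}, since in the saturated branch the utility is pinned at $1$ and every gain is trivially zero, so the stated formulas should be read as describing exactly that regime.
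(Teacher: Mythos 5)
Your derivation is correct: the product structure $U_{ij}=\bar D_{ij}\,\bar D_{ij}^{\textrm{u}}$ does decouple the three cases, the difference of reciprocals $K_i(\sigma_i^{\max})\bigl(1/(c+\Delta c)-1/c\bigr)=-K_i(\sigma_i^{\max})\,\Delta c/(c^2+c\,\Delta c)$ normalized by $D_{ij,\max}^{\textrm{u}}-\gamma_D^{\textrm{u}}$ reproduces the ``else'' branch exactly, the two limiting regimes follow by keeping the dominant term of the denominator, and the computational case is the same algebra with $m_{ij}\mapsto m_{ij}+\Delta m$; your two caveats (additivity of $c_{ij}$ in the block indicators, which holds here because the per-block SINRs are unaffected by reallocating blocks among users of the same SBS, and restriction to the above-threshold branch of \refeq{eq:delayd}--\refeq{eq:delayu}) are precisely the points that need stating. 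The paper itself relegates the proof to an external reference, but this direct computation is evidently the intended argument and nothing is missing.
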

\begin{proof} The details of the proof are found in \cite{proof} 
\end{proof}
From Theorem \ref{th:1}, we can see that the allocation of spectrum and computational resources jointly determines the delay utility. Indeed, Theorem \ref{th:1} will provide guidance for the SBSs when they select actions in the learning algorithm proposed in Section \ref{se:al}.     
 \subsection{Problem Formulation}
 Given the defined system model, our goal is to develop an effective resource allocation scheme that allocates resource blocks and computational power to maximize the utility functions of all users. However, the maximization problem depends not only on the allocation of resource blocks and computational resources but also on the user associations. Moreover, the utility value of each SBS depends not only on its own choice of resource allocation scheme but also on the remaining SBSs' schemes. In addition, the data correlation among the users varies as the period changes, which will affect the resource allocation and user association. In this case, we first formulate a noncooperative game $\mathcal{G} = \left[ {\mathcal{R},\left\{ {{\mathcal{A}_j}} \right\}_{j \in \mathcal{R}},\left\{ {{U_j}} \right\}}_{j \in \mathcal{R}} \right]$. In this game, the players are the SBSs, $\mathcal{A}_j$ represents the action set of each SBS $j$, and $U_j$ is the utility function of each SBS $j$. The action $\boldsymbol{a}_j$ of SBS $j$ consists of: (i) downlink resource allocation vector $\boldsymbol{s}_j=\left[\boldsymbol{s}_{1j},\boldsymbol{s}_{2j}, \dots , \boldsymbol{s}_{\mathcal{U}_jj}\right]$, (ii) uplink resource allocation vector $\boldsymbol{v}_j=\left[\boldsymbol{v}_{1j},\boldsymbol{v}_{2j}, \dots , \boldsymbol{v}_{\mathcal{U}_jj}\right]$, and (iii) computational resource allocation vector $\boldsymbol{m}_j=\left[m_{1j}, m_{2j}, \dots , m_{\mathcal{U}_jj}\right]$. Here, $m_{ij} \in \mathcal{M}, i \in \mathcal{U}_j$ where $\mathcal{M}= \left\{\frac{c}{M},\frac{2c}{M},...,c\right\}$ is a finite set of M level fractions of SBS $j$'s total computational resource $m_j$. We assume that each SBS $j$ adopts one action at each time slot $t$. Then, the utility function of each SBS $j$ can be given by:          
 \begin{equation}\label{eq:u}
\setlength{\abovedisplayskip}{4 pt}
\setlength{\belowdisplayskip}{4 pt}
u_{j} \left(\boldsymbol{a}_{j}, {\boldsymbol{a}_{-j}}\right)=\!\!\mathop \frac{1}{T}\sum\limits_{t = 1}^T \sum\limits_{i \in \mathcal{U}_j}{ U_{ij,t}\left( { {{\boldsymbol{s}_{ij},\boldsymbol{v}_{ij}}}, \boldsymbol{c}_{ij}} \right)},
\end{equation}                                                                
where $\boldsymbol{a}_{j} \in {\mathcal{A}_{j}}$ is an action of SBS $j$ and $\boldsymbol{a}_{-j}$ denotes the action profile of all SBSs other than SBS $j$. Indeed, (\ref{eq:u}) captures the average utility value of each SBS $j$. Let ${\pi _{j,\boldsymbol{a}_{ij}}} \!=\! \!\frac{1}{T}\!\!\sum\limits_{t = 1}^T\!\mathbbm{1}_{\left\{ \boldsymbol{a}_{j,t}=\boldsymbol{a}_{ij} \right\}}\!\!= \Pr \left( {{\boldsymbol{a}_{j,t}} = {\boldsymbol{a}_{ij}}} \right)$ be the probability of SBS $j$ using action $\boldsymbol{a}_{ij}$. $\boldsymbol{a}_{j,t}$ represents the action that SBS $j$ uses at time $t$ and ${{\boldsymbol{a}_{j,t}} = {\boldsymbol{a}_{ij}}}$ denotes that SBS $j$ adopts action $\boldsymbol{a}_{ij}$ at time $t$. ${\boldsymbol{\pi} _j}= \left[ {{\pi _{j,\boldsymbol{a}_{1j}}}, \ldots ,{\pi _{j,\boldsymbol{a}_{\left| {{\mathcal{A}_j}} \right|j}}}} \right]$ is the action selection mixed strategy of SBS $j$ with $\left| {{\mathcal{A}_j}} \right|$ being the number of actions of SBS $j$. Based on the definition of the strategy, the utility function in (\ref{eq:u}) is given by:
\begin{equation}\small
\setlength{\abovedisplayskip}{4 pt}
\setlength{\belowdisplayskip}{4 pt}
\begin{split}
u_{j} \!\left(\boldsymbol{a}_{j}, {\boldsymbol{a}_{-j}}\right)\!&\!=\!\! \frac{1}{T}\!\sum\limits_{t = 1}^T \!{ U_{j,t}\!\left( { {{\boldsymbol{a}_{j},\boldsymbol{a}_{-j}}}} \right)}\!=\!\!\sum\limits_{\boldsymbol{a} \in \mathcal{A}}\!\!\left(\! {{\!U_{j}}\!\left( {{\boldsymbol{a}_j},{\boldsymbol{a}_{ - j}}} \right)\!}\! \prod\limits_{j \in \mathcal{B}} {\!\pi _{j,\boldsymbol{a}_j}}\!\!\right),
\end{split}
\end{equation}
where $\boldsymbol{a} \in \mathcal{A}$ with $\mathcal{A}$ being the action set of all SBSs.   

One suitable solution for this game is the mixed-strategy Nash equilibrium (NE), formally defined as follows \cite{bacci2016game}:
 A mixed strategy profile ${\boldsymbol{\pi} ^*} = \left( {\boldsymbol{\pi} _1^*, \ldots ,\boldsymbol{\pi} _{B}^*} \right) = \left( {\boldsymbol{\pi} _j^*,\boldsymbol{\pi}_{ - j}^*} \right)$ is a \emph{mixed-strategy Nash equilibrium} if, $\forall j \in \mathcal{R}$ and $\boldsymbol{\pi} _j$, we have:
\begin{equation}\label{eq:mNE}
{ u_j}\left( {\boldsymbol{\pi} _j^*,\boldsymbol{\pi} _{ - j}^*} \right) \ge { u_j}\left( {{\boldsymbol{\pi} _j},\boldsymbol{\pi} _{ - j}^*} \right),
\end{equation}
where 
${ u_j}\left( {{\boldsymbol{\pi} _n},{\boldsymbol{\pi} _{ - n}}} \right) =\sum\limits_{{\boldsymbol{a}} \in {\mathcal{A}}} {{U_j}\left( {{\boldsymbol{a}_j,\boldsymbol{a}_{-j}}} \right)\prod\limits_{j \in \mathcal{B}} {{\pi _{j,{\boldsymbol{a}_j}}}}}$ is the expected utility of SBS $j$ when it selects the mixed strategy $\boldsymbol{\pi} _j$.
For our game, the mixed-strategy NE for the SBSs represents a solution of the game at which each SBS $j$ can minimize the delay for its associated users, given the actions of its opponents. 

 \section{Echo State Networks for Self-Organizing Resource Allocation} \label{se:al}
Next, we introduce a transfer reinforcement learning (RL) algorithm that can be used to find an NE of the VR game. To satisfy the delay requirement for the VR transmission, we propose a transfer RL algorithm based on the neural networks framework of ESNs [16]. Traditional RL algorithms such as Q-learning typically rely on a Q-table to record the utility value. However, as the number of players and actions increases, the number of utility values that the Q-table needs to include will increase exponentially and, {hence, the Q-table may not be able to record all of the needed utility values. However, the proposed algorithm uses a utility function approximation method to record the utility value and, hence, it can be used for large networks and large utility spaces.} Moreover, a dynamic network in which the users' computational resources and data correlation may change across the time, traditional RL algorithms need to be executed each time the network changes. However, the proposed ESN transfer RL algorithm can find the relationship of the utility functions when the environment changes. After learning this relationship, the proposed algorithm can use the historic learning result to find a mixed strategy NE. 

The proposed transfer RL algorithm consists of two components: (i) ESN-based RL algorithm and (ii) ESN-based transfer learning algorithm. The ESN-based RL algorithm is based on our work in \cite{VROWNchen}, and, thus, here, we just introduce the ESN-based transfer learning algorithm. 

We first assume that, before the users' state information changes, the mixed strategy, action, and utility of each SBS $j$ are $\boldsymbol{\pi}_j$, $\boldsymbol{a}_j$ and $\hat u_{j}\left(\boldsymbol{a}_j,\boldsymbol{a}_{-j}\right)$, while the strategy, action, and utility of SBS $j$, after the users' state information changes, are $\boldsymbol{\pi}_j$, $\boldsymbol{a}_j$, and $\hat u'_{j}\left(\boldsymbol{a}_j,\boldsymbol{a}_{-j}\right)$. Since the number of users associated with SBS $j$ is unchanged, the action and mixed strategy sets of SBS $j$ will not change when the users' state information changes.  In this case, the proposed ESN-based transfer learning algorithm is used to find the relationship between $\hat u_{j}\left(\boldsymbol{a}_j,\boldsymbol{a}_{-j}\right)$ and $\hat u'_{j}\left(\boldsymbol{a}_j,\boldsymbol{a}_{-j}\right)$ when SBS $j$ only knows $\hat u_{j}\left(\boldsymbol{a}_j,\boldsymbol{a}_{-j}\right)$. This means that the proposed algorithm can transfer the information from the already learned utility $\hat u_{j}\left(\boldsymbol{a}_j,\boldsymbol{a}_{-j}\right)$ to the new utility $\hat u'_{j}\left(\boldsymbol{a}_j,\boldsymbol{a}_{-j}\right)$ that must be learned. The ESN-based transfer learning algorithm of each SBS $j$ consists of three components: (a) input, (b) output, and (c) ESN model, which are given by: 

$\bullet$ \emph{Input:} The ESN-based transfer learning algorithm takes the strategies of the SBSs and the action of SBS $j$ uses at time $t$ as input which is given by $\boldsymbol{x}'_{t,j}=\left[ {\pi_{1}, \cdots , \pi_{B}}, \boldsymbol{a}_{j,t} \right]^{\mathrm{T}}$.

$\bullet$ \emph{Output:} The output of the ESN-based transfer learning algorithm at time $t$ is the deviation of the utility values when the users' information changes ${y}'_{j,t}=\hat u'_{j}\left( \boldsymbol{a}_{j,t}\right)-\hat u_{j}\left( \boldsymbol{a}_{j,t}\right)$.

$\bullet$ \emph{ESN Model:} An ESN model is used to find the relationship between the input $\boldsymbol{x}'_{t,j}$ and output ${y}'_{t,j}$. The ESN model consists of the output weight matrix $\boldsymbol{W}_j^{'\textrm{out}} \in {\mathbb{R}^{1 \times N_w}}$ and the dynamic reservoir containing the input weight matrix $\boldsymbol{W}_j^{'\textrm{in}} \in {\mathbb{R}^{N_w \times B+1}}$, and the recurrent matrix $\boldsymbol{W}'_j \in \mathbb{R}^{N_w \times N_w}$ with $N_w$ being the number of the dynamic reservoir units. Here, the dynamic reservoir is used to store historic ESN information that includes input, reservoir state, and output. This information is used to build the relationship between the input and output. The update process of the dynamic reservoir will be given by: 
 \begin{equation}\label{eq:state2}
 \setlength{\abovedisplayskip}{4 pt}
\setlength{\belowdisplayskip}{4 pt}
{\boldsymbol{\mu}'_{j,t}} ={\mathop{f}\nolimits}\!\left( {\boldsymbol{W}'_j{\boldsymbol{\mu}'_{j,t-1}} + \boldsymbol{W}_j^{'\textrm{in}}{\boldsymbol{x}'_{j,t}}} \right).
\end{equation}
where  $f\!\left(x\right)=\frac{{{e^x} - {e^{ - x}}}}{{{e^x} + {e^{ - x}}}}$ is the tanh function. Based on the dynamic reservoir state, the ESN-based transfer learning algorithm will combine with the output weight matrix to approximate the deviation of the utility value, which can be given by:
\begin{equation}\label{eq:update}
\setlength{\abovedisplayskip}{4 pt}
\setlength{\belowdisplayskip}{4 pt}
{y}'_{j,t} = {\boldsymbol{W}_{j,t}^{'\textrm{out}}} {{\boldsymbol{\mu}'_{j,t}}},
\end{equation}
where ${\boldsymbol{W}_{j,t}^{'\textrm{out}}}$ is the output weight matrix at time slot $t$. 
\begin{equation}\label{eq:w2}
\setlength{\abovedisplayskip}{4 pt}
\setlength{\belowdisplayskip}{4 pt}
{\boldsymbol{W}_{j,t + 1}^{'\textrm{out}}} = {\boldsymbol{W}_{j,t}^{'\textrm{out}}} + {\lambda'} \left( {\hat u'_{j}\left( \boldsymbol{a}_{j,t}\right)-\hat u_{j}\left( \boldsymbol{a}_{j,t}\right) -y'_{j,t}} \right){{\boldsymbol{\mu}'}_{j,t}^{\mathrm{T}}},
\end{equation}
where $\lambda'$ is the learning rate, and $\hat u'_{j,t}$ is the actual deviation between two utility values. In this case, the ESN-based transfer learning algorithm can find the relationship between the utility functions when the users' state information changes and, {hence, reduce the iterations of the RL algorithm to learn the new utility values.} The proposed, distributed ESN-based learning algorithm performed by each SBS $j$ is summarized in Table~I. {The proposed algorithm is guaranteed to converge to an NE and this convergence follows from \cite{VROWNchen}.}


\begin{table}[!t]\label{tb1}
  \centering
  \caption{
    \vspace*{-0.2em} ESN-based Learning Algorithm for resource Allocation}\vspace*{-1em}
    \begin{tabular}{p{3.5in}}
      \hline \vspace*{-0.8em}
      \textbf{Inputs:}\,\, $\boldsymbol{x}_{j,t}$ and $\boldsymbol{x}'_{j,t}$  \vspace*{-0.5em}\\
\hspace*{1em}\textit{Initialize:}   \vspace*{-0.3em}
$\boldsymbol{W}_j^{\textrm{in}}$, $\boldsymbol{W}_j$, $\boldsymbol{W}_j^{\textrm{out}}$, $\boldsymbol{W}_j^{'\textrm{in}}$, $\boldsymbol{W}'_j$, $\boldsymbol{W}_j^{'\textrm{out}}$, $\boldsymbol{y}_{j}=0$, and ${y}'_{j}=0$.
\vspace*{-0.1em}
\hspace*{0em}\begin{itemize}\vspace*{-0.1em}
\item[] \hspace*{0em} \textbf{for} each time $t$ \textbf{do}.
\item[] \hspace*{0.8em}(a) Estimate the value of the utility function $\hat u_{j,t}$ based on (\ref{eq:update}).
\item[] \hspace*{0.3em} \textbf{if} $t==1$ 
\item[] \hspace*{0.8em}(b) Set the mixed strategy $\boldsymbol{\pi}_{j,t}$ uniformly.
\item[] \hspace*{0.3em} \textbf{else}
\item[] \hspace*{1.8em}(c) Set the mixed strategy $\boldsymbol{\pi}_{j,t}$ based on the $\varepsilon$-greedy exploration.
\item[] \hspace*{0.3em} \textbf{end if}
\item[] \hspace*{0.8em}(d) Broadcast the index of the mixed strategy to other SBSs.
\item[] \hspace*{0.8em}(e) Receive the index of the mixed strategy as input $\boldsymbol{x}_{j,t}$.
\item[] \hspace*{0.8em}(f) Perform an action based on the mixed strategy. 
\item[] \hspace*{0.8em}(g) Use the index of the mixed strategies and action as input $\boldsymbol{x}'_{j,t}$.
\item[] \hspace*{0.8em}(h) Estimate the value of the difference of utility function $y'_{j,t}$.
\item[] \hspace*{0.8em}(i) Update the dynamic reservoir state $\boldsymbol{\mu}_{j,t}$.
\item[] \hspace*{0.8em}(j) Update the output weight matrix $\boldsymbol{W}_j^{\textrm{out}}$ based on $y'_{j,t}$.
\item[] \hspace*{0em} \textbf{end for}
\end{itemize}\vspace*{-0cm}\vspace*{-0.8em}\\
   \hline
    \end{tabular}\label{tab:algo}\vspace{-0.4cm}
\end{table} 

\section{Simulation Results}
\vspace{-0.1cm}
For our simulations, we consider an SCN deployed within a circular area with radius $r = 100$ m. $U=25$ users and $B=4$ SBSs are uniformly distributed in this SCN area. The rate requirement of VR transmission is 25.32 Mbit/s \cite{VROWNchen}. 
The detailed parameters are listed in Table  \uppercase\expandafter{\romannumeral3}. For comparison purposes, we use ESN algorithm and a baseline Q-learning algorithm in \cite{VROWNchen}. 

\begin{table}
  \newcommand{\tabincell}[2]{\begin{tabular}{@{}#1@{}}#2\end{tabular}}
\renewcommand\arraystretch{0.8}
 \caption{
    \vspace*{-0.2em}\tiny SYSTEM PARAMETERS}\vspace*{-1em}
\centering  
\begin{tabular}{|c|c|c|c|}
\hline
\textbf{Parameter} & \textbf{Value} & \textbf{Parameter} & \textbf{Value} \\
\hline
$F$ & 1000 & $P_B$ & 20 dBm\\
\hline
$B$ & 2 MHz & $S$, $V$ & 5, 5\\
\hline
$N_w$ & 1000 & $\sigma ^2$ & -95 dBm\\
\hline
$ N_{v}$ &6&$\lambda$, $\lambda'$ & 0.03, 0.3 \\
\hline
$m$&5&$r_B$& $30$ m\\
\hline
$\alpha$ & 2 & $V_{F}$ & 100 Gbit/s\\
\hline
\end{tabular}
 \vspace{-0.5cm}
\end{table}  

 \begin{figure}[!t]
  \begin{center}
   \vspace{0cm}
    \includegraphics[width=6.5cm]{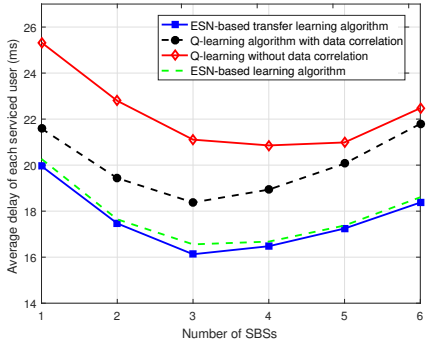}
    \vspace{-0.3cm}
    \caption{\label{figure4} Average delay of each user vs. number of SBSs.}
  \end{center}\vspace{-1cm}
\end{figure}
\vspace{-0cm}
Fig. \ref{figure4} shows how the average delay per user changes with the number of SBSs. Fig. \ref{figure4} shows that, as the number of SBSs increases, the average delay of all algorithms decreases, then increases. This is due to the fact that, as the number of SBSs increases, the number of users located in each SBS's coverage decreases and, hence, the average delay decreases. However, as the number of SBSs keeps increasing, the interference will also increase.
Fig. \ref{figure4} also shows that our algorithm achieves up to 16.7\% and 18.2\% gains in terms of average delay compared to the Q-learning with data correlation and Q-learning without data correlation for 6 SBSs.   
This is due to the fact that our algorithm can transfer information across time. From Fig. \ref{figure4}, we can also see that the deviation between Q-learning algorithms decreases as the number of SBSs changes. In fact, as the number of SBSs increases, the number of users associated with each SBS decreases and, hence, the data correlation of users decreases. {Fig. \ref{figure4} also shows that the delay gain of the proposed algorithm is small compared with ESN algorithm. However, the proposed algorithm can converge much faster as shown in Fig. \ref{figure6}.}

 \begin{figure}[!t]
  \begin{center}
   \vspace{0cm}
    \includegraphics[width=6.5cm]{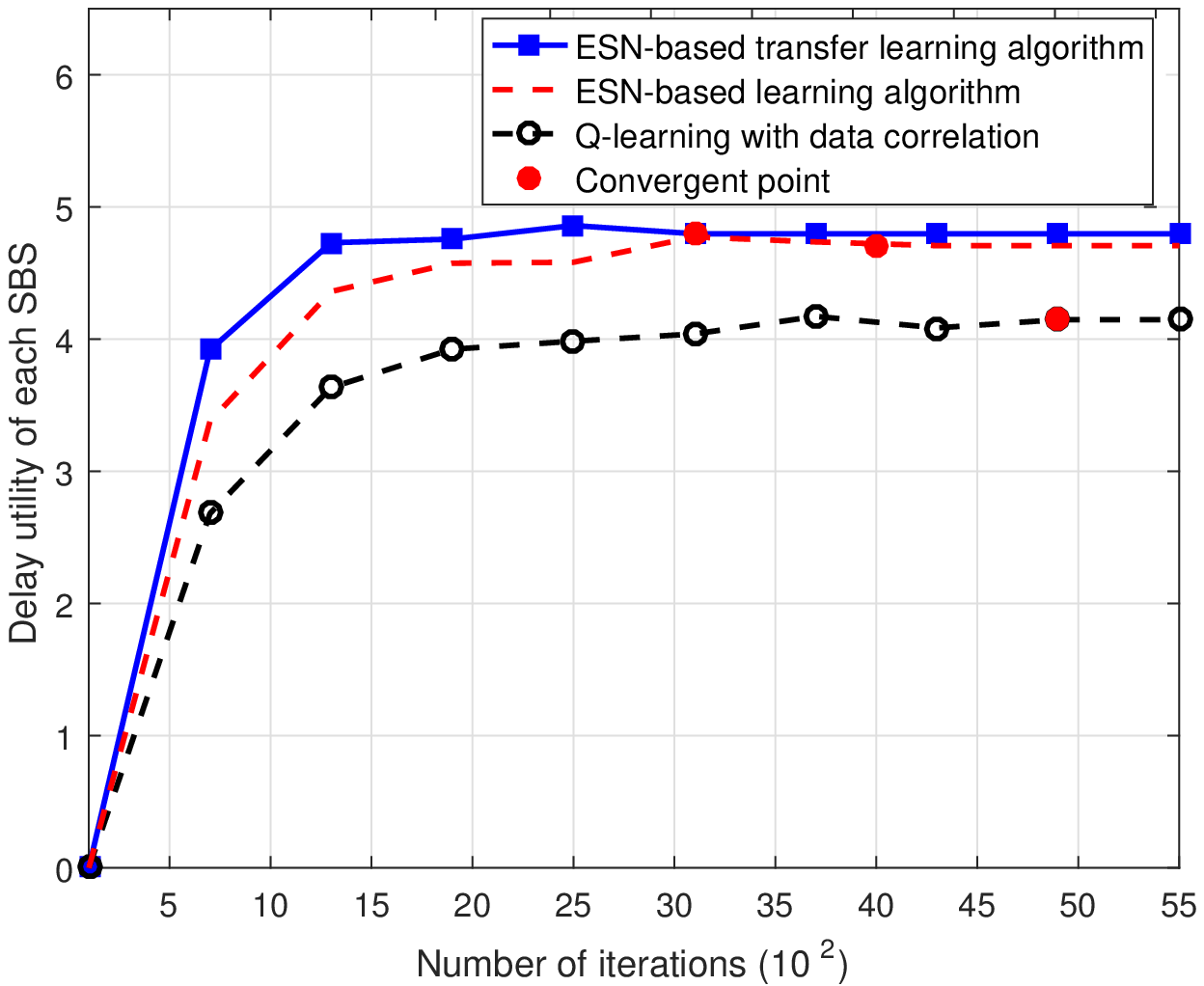}
    \vspace{-0.3cm}
    \caption{\label{figure6}Convergence of the proposed algorithm and Q-learning.}
  \end{center}\vspace{-0.9cm}
\end{figure}
\vspace{-0cm}

Fig. \ref{figure6} shows the number of iterations needed till convergence for the proposed approach, ESN algorithm, and Q-learning with data correlation when the users' information changes. In this figure, we can see that, as time elapses, the delay utilities for all considered algorithms increase until convergence to their final values. Fig. \ref{figure6} also shows that the proposed algorithm achieves, respectively, 22.5\% and 36\% gains in terms of the number of the iterations needed to reach convergence compared to ESN algorithm and Q-learning. This implies that the proposed algorithm can apply the already learned utility value to the new utility value that must be learned as the users' information changes.


\section{Conclusion}
In this paper, we have proposed a novel resource allocation framework for optimizing delay for wireless VR services with data correlation. We have formulated the problem as a noncooperative game and proposed a novel transfer learning algorithm based on echo state networks to solve the game. 
The proposed learning algorithm can use the existing learning result to directly find the optimal resource allocation when the users' state information changes and, hence, can quickly converge to a mixed-strategy NE. Simulation results have shown that the proposed algorithm has a faster convergence time than Q-learning and guarantees low delays for VR services. 

\vspace{-0.1cm}
\bibliographystyle{IEEEbib}
\def\baselinestretch{0.7}
\bibliography{references}
\end{document}